\newtheorem{thm}{Theorem}[section]
\newtheorem{prop}[thm]{Proposition}
\newtheorem{cor}[thm]{Corollary}
\newtheorem{lemma}[thm]{Lemma}
\newtheorem*{HLK}{Theorem (Huygens-Leibniz-K\"{o}nig)}
\newtheorem*{ACF}{Corollary (Albouy-Chenciner)}
\theoremstyle{definition}
\newtheorem{defn}[thm]{Definition}
\theoremstyle{definition}
\newtheorem{rem}[thm]{Remark}
\newcommand{\ssp}{\mathbb{W}}
\newcommand{\reals}{\mathbb{R}}
\newcommand{\longvec}{\overrightarrow}
\newcommand{\vecaffX}{\overrightarrow{\text{Aff}(X)}}
\newcommand{\affspa}{(\mathbb{A},\longvec{\mathbb{A}},\tau)}
\newcommand{\aff}{\text{Aff}}
\newcommand{\bary}{\text{bar}}
\numberwithin{equation}{section}
\title{Moments, Equilibrium Equations and Mutual Distances}
\author{Eduardo S. G. Leandro}
\address{Universidade Federal de Pernambuco (UFPE) \\
	Departamento de Matem\'atica \\
	Av. Jornalista An\'{\i}bal Fernandes S/N, Recife, PE \\
	50740-560, Brasil\\
	Phone: 55 81 2126-7665, Fax: 55 81 2126-8410.
}
\email{eduardo.leandro@ufpe.br}
\subjclass[2020]{70F10, 37N05, 70Fxx, 37Cxx.}
\keywords{Affine Geometry, Equilibria, Mutual Distances, $N$-Body Problems, Central Configurations.}
\begin{document}
	

	\begin{abstract}
		We review and develop the classical theory of moments of configurations of weighted points with a focus on systems with an identically vanishing first moment. The latter condition produces equations for equilibrium configurations of systems of interacting particles under the sole condition that interactions are between pairs of particles and along the lines connecting such pairs. Complying external forces are admitted, so the description of some dynamical equilibrium configurations, such as relative equilibria in Celestial Mechanics, is included in our approach. Moments provide a unified framework for equilibrium problems in arbitrary dimensions. The equilibrium equations are homogeneous and invariant by isometries (for interactions depending only on mutual distances), and are obtained through simple algebraic procedures requiring neither reduction by isometries nor a variational principle for their determination.  Our equations include the renowned set of $n$-body central configuration equations by A. Albouy and A. Chenciner. These equations are extended to a rather broad class of equilibrium problems, and new equilibrium equations written in terms of mutual distances are introduced. We also apply moments to the theory of constraints for mutual distances of configurations of fixed dimension, and for co-spherical configurations, thus re-obtaining and adding to classical results by A. Cayley and successors.	For the sake of concreteness, novel sets of central configurations equations are provided.	
	\end{abstract}

	\maketitle
	
	\section{Introduction} \label{Intro}
	Equilibrium problems have contributed to the development of Mathematics and Physics since ancient Greece. The work of Archimedes on the principle of the lever is considered the beginning of the mathematical treatment of Statics, but even before him, there had appeared books attributed to Aristotle and to Euclid on the dynamics and balance of the lever, respectively~\cite{Dugas}. The concept of barycenter, or center of gravity, of a system of point masses, and its application to studying equilibrium configurations go back to Archimedes. As we shall review in section~\ref{secWS}, the barycenter is defined for sets of weighted points whose total weight, $\mu_0$, is nonzero, as the unique point where the first moment, $\mu_1$, vanishes. In the 17th century, C. Huygens used $\mu_1$ to introduce the notion of expected value in probability, and realized the importance of the second moment, $\mu_2$, in the form of the moment of inertia (Euler's terminology), for the analysis of pendulum motion. Later on G. W. Leibniz and J. S. K\"{o}nig applied $\mu_2$ in Geometry and Dynamics~\cite{Gidea-Niculescu}. A consequential event was the invention by A. F. M\"{o}bius of homogeneous, barycentric coordinates in the early 19th century~\cite{Coxeter}. In contemporary Mathematical Statistics and Analysis, among other areas, one finds the celebrated (Stieltjes, Hausdorff, Hamburger) moment problem: given a sequence of moments of a real random variable, determine the (unique) probability distribution (measure) that produced the sequence~\cite{Landau,Shohat-Tamarkin}. 
	
	In the physical sciences, it has been a long tradition to represent configurations of (interacting) point masses, electric charges, point vortices, etc, by a finite set $X$ formed by $n$ points in the Euclidean affine space $\mathbb{A}=\reals^N$ over the field of scalars $K=\reals$, and a  function $w$ from $X$ to $K$ denoting the mass, charge, vorticity, etc, of each point of $X$. In an abstract sense, we refer to $X$ and $w$ as \emph{configuration} and \emph{weight function}, respectively, and call the pair $(X,w)$ a \emph{weighted system}. 
	Moving beyond this tradition, in this paper we view the interactions of particles as themselves defining the weight functions. More precisely, we introduce a family of weighted systems $(X,w_x)$, $x \in X$, associated with a configuration $X$ of interacting particles, so that $w_x(y)$ is the \emph{force} (this familiar term will be often used in place of \emph{interaction}) exerted by the particle $y \in X$ on the particle $x$, while $w_x(x)$ is the negative of the total force acting on $x$. 
	
	As already mentioned, three  elementary functions are historically associated with every weighted system: the total weight $\mu_0$ (``zeroth moment''), the first moment $\mu_1$ and the second moment $\mu_2$.  The key realization that an equilibrium configuration $X$ corresponds to the family of weighted systems $(X,w_x)$, $x \in X$, fulfilling the condition $\mu_1=\longvec{O}$ (this notation means that the first moment is the null function) for each system $(X,w_x)$ motivates a deeper investigation of the theory of moments of weighted systems in affine spaces. Moments allow us to formulate problems of equilibrium configurations of systems of interacting particles (possibly subject to certain types of external forces) that satisfy the single fundamental hypothesis
	{\bf(FH)}: \emph{the interaction between the particle at $x$ and the particle at $y$, with $x,y \in X$, is directed along the line determined by $x$ and $y$}. 
	\label{fundamentalhypothesis} Actually it suffices that the interactions fulfill the weaker fundamental hypothesis {\bf(WFH)}: \emph{the total interaction (e.g., the sum of all internal and external forces) at each point $x$ of the configuration $X$ is a linear combination of the relative position vectors from $x$ to the remaining points of $X$}. \label{fundamentalhypothesisI} Hypothesis {\bf(WFH)} extends the scope of applications of the moment apparatus to systems affected by some inertial forces, for instance, centrifugal forces giving rise to relative equilibrium configurations.

	The main pillar for the present article are the classical identities deduced by the pioneers of moment theory and their use, explicitly or not, to study central configurations of the $n$-body problem. 
	
	Central configurations is a research topic with a long history and recognized relevance to Celestial Mechanics~\cite{Albouy-Chenciner,Albouy,Moeckel,Smale,Wintner}. One of the reasons is that the only known explicit solutions of the  Newtonian $n$-body problem, namely the \emph{homographic motions}, start at such configurations (cf. Wintner~\cite{Wintner}, \S375). The possible homographic motions include the \emph{relative equilibria}, which are periodic solutions that occur in any Euclidean space of even dimension~\cite{Albouy-Chenciner,Moeckel}. 
	Notwithstanding the many particular questions that have been answered, so far a classification of central configurations is known only for three arbitrary masses, thanks to Euler~\cite{Euler} and Lagrange~\cite{Lagrange}, and for four equal masses, thanks to Albouy~\cite{AlbouyIII}.  General, basic problems on central configurations remain unsolved. An outstanding example is the celebrated Smale's sixth problem for the 21st century~\cite{SmaleI,SmaleII}, which asks whether the number of similarity classes of relative equilibria of $n$ positive point masses is finite for all $n$. In the broader sense of central configurations of arbitrary dimension, Smale's problem is part of the Chazy-Wintner-Smale finiteness conjecture~\cite{Albouy}. 
	
	The finiteness problem for central configurations has been solved only in the case $n=4$ by Hampton and Moeckel~\cite{Hampton-Moeckel,Hampton-MoeckelII}, who mainly employed the set of polynomial equations introduced by A. Albouy and A. Chenciner in the celebrated article~\cite{Albouy-Chenciner}. 
	A remarkable feature of the Albouy-Chenciner equations is that they depend only on the mutual distances of the configuration. Thus, besides eliminating the rotational and translational symmetries, the Albouy-Chenciner equations are at once valid for configurations of all dimensions. 
	Unlike the known deductions of the Albouy-Chenciner equations~\cite{Albouy-Chenciner,Hampton-Moeckel}, which use reduction by isometries, early on in this paper we obtain  a prototype of these equations for every system with $\mu_1=\longvec{O}$ as a nearly straightforward consequence of the basic interplay of the moments $\mu_0$, $\mu_1$ and $\mu_2$. Taking advantage of the broader perspective offered by the theory of moments, we also provide another set of algebraic equations exhibiting the properties of the Albouy-Chenciner equations mentioned in the previous paragraph. We call these new equations the \emph{extended Leibniz identities}, as they are closely related to a classical identity involving moments and mutual distances attributed to Leibniz. 
	

	
	
	In addition to providing a unified framework for the construction of families of equilibrium equations, we shall see that the theory of moments of weighted systems is a key tool to study the geometry of configurations.  
	
	In the previously mentioned classification of four-equal-mass central configurations, Albouy adopted and developed the results by O. Dziobek~\cite{Dziobek}, who used mutual distances as coordinates in order to describe $n$-body central configurations as critical points of a modified potential function subject to constraints that guaranteed the configuration was planar. The constraints used by Dziobek were expressed in terms of the determinants introduced by A. Cayley~\cite{Cayley}, which later became known as Cayley-Menger determinants~\cite{Berger,Blumenthal}. Dziobek's apparatus was adapted by Moeckel~\cite{MoeckelI} to dimension $n-2$, for arbitrary $n\geq 4$. 
	The critical point approach to central configurations motivated us to revisit the theory of constraints for mutual distances. The interplay of the first three moments allows us to establish, in a simple manner, the theory of constraints for mutual distances between the points of a configuration of prescribed dimension, or lying on a sphere. 
	
	

	Before moving on to a summary of each section in the paper, we make a brief note related to the moment problem mentioned in the first paragraph. We notice that the moment problem suggests an inverse of the equilibrium problem treated in the present work, viz., assuming the configuration $X$ is known, one would be asked to find the interactions $w_x$, $x \in X$, which make $X$ an equilibrium. For this problem, our equilibrium equations  offer low-degree homogeneous relations that can be used to determine the $w_x$. 
	

	\subsection*{Outline of Specific Contents}   
	The fundamentals of the theory of moments of weighted systems are reviewed in section~\ref{secWS}, culminating with the Huygens-Leibniz-K\"{o}nig (HLK) theorem, which collects classical results relating $\mu_0, \mu_1$ and $\mu_2$.  We demonstrate three consequential corollaries of the HLK theorem with conditions on the second moment $\mu_2$ of a weighted system with $\mu_0=0$ to have $\mu_1=\longvec{O}$. 
	
	Equilibrium equations for systems of interacting particles are introduced in section~\ref{Statics}. 
	We apply the results of section~\ref{secWS} and section~\ref{matrix} to obtain equilibrium equations that are either linear or bilinear on the  $\varphi_{x,y}$, and linear on the squared mutual distances between points of $X$. These equations 
	are the generalization of the Albouy-Chenciner equations for central configurations as well as the set of equations we call the extended Leibniz identities. 
	
	
	
	In section~\ref{matrix} we use matrices to provide a more algebraic approach to the study of moments, which are viewed as linear maps on the space of weight functions. The kernels of these maps are related in a way precisely stated in theorem~\ref{bridge}, which serves as basis for the study of mutual distance constraints carried out in section~\ref{appendix}. The section closes with a proposition that justifies the extended Leibniz identities stated in section~\ref{Statics}.

	The study of weighted systems $(X,w)$ with $\mu_1=\longvec{O}$ calls for some additional concepts, specially that of \emph{codimension} of a configuration. In section~\ref{appendix} we use the remarkable (albeit simple) connection between identically vanishing first moment (and thus the notion of equilibrium!) and the codimension of a configuration. This connection and corollary~\ref{cospherical} produce the classical co-sphericity condition of Cayley~\cite{Cayley}. 
	The weight functions of systems with $\mu_1=\longvec{O}$ (and fixed $X$) form a vector space $\ssp_0(X)$ of dimension equal to the codimension $c:=n-d-1$ of $X$, where $n$ is (throughout this paper) the number of points in $X$, and $d$ is dimension of the affine closure of $X$, usually referred to as the \emph{dimension} of $X$. Theorem~\ref{bridge} implies a relation between the dimension of the configuration and vanishing Cayley-Menger determinants of subconfigurations of $X$ and, in the proof of theorem~\ref{constraints}, we indicate how to obtain the independent (Cayley-Menger determinantal) constraints that ensure a configuration has a given dimension. Theorem~\ref{constraints} provides the expression ${c+1 \choose 2 }$ for the number of constraints that ensure a configuration has dimension $d=n-c-1$. This expression generalizes the corresponding formul\ae~ for dimensions 1, 2 and 3 in Wintner~\cite{Wintner}, $\S357$. 
	

	Section~\ref{cc} concerns applications of the preceding sections to central configurations. We consider the not so explored problem of central configurations with zero total mass, for which some geometric properties and algebraic relations 
	are directly determined.
	For central configurations with nonzero total mass, the fact that $\text{dim } \ssp_0(X)=c$ is used to extend to arbitrary codimensions some determinantal identities obtained by Dziobek for central configurations with codimension one~\cite{Dziobek,Hagihara}.

	The article concludes with appendix A, where we apply the HLK theorem of section~\ref{secWS} to deduce simple linear constraints for squared mutual distances between points in space and the points of a fixed configuration.

		\section{Moments of a Weighted System} \label{secWS}
		
		We begin with an overview of affine geometry; see \cite{Berger} for a thorough exposition.  Let $\affspa$ be an \emph{affine space over a field} $K$, that is, a triple formed by a point set $\mathbb{A}$, a vector space $\longvec{\mathbb{A}}$ over $K$ and a free action $\tau$ of the additive group of $\longvec{\mathbb{A}}$ on $\mathbb{A}$. The vector $\longvec{v} \in \longvec{\mathbb{A}}$ acts on the point $x \in \mathbb{A}$ by translating it to a point $y \in \mathbb{A}$ and is unique in that sense, so we write
		\[ \overrightarrow{xy}=\longvec{v}.\]
		The zero vector, $\longvec{O}$, corresponds to the trivial translation. For any pair of points $x,y \in \mathbb{A}$, it is assumed the existence of a vector $\longvec{v} \in \longvec{\mathbb{A}}$ satisfying the above equation; in other words, $\tau$ is transitive. It is common to write 
		\[ x+\longvec{xy}=y \quad \text{or} \quad \longvec{xy}=y-x.\]
		The pair $(\longvec{\mathbb{A}},\tau)$ is the \emph{affine structure} of the affine space $\affspa$, which is often denoted simply by $\mathbb{A}$. If $\longvec{\mathbb{A}}$ is endowed with an inner product, we call $\mathbb{A}$ an \emph{Euclidean affine space} and use a $\cdot$ to indicate the inner product. A basic and very useful result is the \emph{Chasles relation}: $\longvec{xy}+\longvec{yz}=\longvec{xz}$.

		
		
		Let $X$ be a finite subset of an affine space $\mathbb{A}$ and $w$ be a function from $X$ to $K$, the \emph{field of scalars} of $\mathbb{A}$. We refer to $X$ as a \emph{configuration} of points in $\mathbb{A}$, and to $w$ as a \emph{weight function} on $X$.  Three key maps can be assigned to the \emph{weighted system} formed by $X$ and $w$, i.e., to the pair $(X,w)$.
		
		\begin{defn} \label{fundamental}
			The \emph{total weight} $\mu_0$, the \emph{first moment} $\mu_1:\mathbb{A}\longrightarrow \longvec{\mathbb{A}}$, and (for Euclidean affine spaces) the \emph{second moment} $\mu_2:\mathbb{A}\longrightarrow K$ of the weighted system $(X,w)$ are respectively given by
			\[ \mu_0=\sum_{x \in X} w(x), \ \ \ \mu_1(p)= \sum_{x \in X} w(x) \longvec{px}, \ \ \text{and} \ \ \mu_2(p)= \sum_{x \in X} w(x) \longvec{px}^2, \ \ \forall p \in \mathbb{A},\]
			where the square of a vector stands for the inner product of the vector with itself.
		\end{defn}
		
		
		The total weight can be thought of as the ``zeroth moment'', hence its notation. It is sometimes convenient to see $\mu_0,$ $\mu_1$ and $\mu_2$ as linear maps on the weight function (see section~\ref{matrix}).
		
		We state some basic results regarding the moments of a fixed weighted system $(X,w)$ in a given affine space $\mathbb{A}$.
		
		\begin{lemma} \label{firstmu}
			For all $p,q \in \mathbb{A}$, we have that $ \mu_1(p)-\mu_1(q)=\mu_0\longvec{pq}.$
		\end{lemma}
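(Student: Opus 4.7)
The plan is to compute $\mu_1(p)$ directly from the definition and rewrite each summand $\overrightarrow{px}$ using the Chasles relation to route through $q$, then collect terms. Specifically, I would write
\[ \mu_1(p) = \sum_{x \in X} w(x)\, \overrightarrow{px} = \sum_{x \in X} w(x)\, \bigl(\overrightarrow{pq} + \overrightarrow{qx}\bigr), \]
using Chasles' relation $\overrightarrow{px} = \overrightarrow{pq} + \overrightarrow{qx}$ term by term. Since $\overrightarrow{pq}$ does not depend on $x$, it factors out of the sum, giving
\[ \mu_1(p) = \left(\sum_{x \in X} w(x)\right) \overrightarrow{pq} + \sum_{x \in X} w(x)\, \overrightarrow{qx} = \mu_0\, \overrightarrow{pq} + \mu_1(q), \]
by the definitions of $\mu_0$ and $\mu_1$ in Definition~\ref{fundamental}. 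Rearranging yields the claim.

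There is no real obstacle here; the only point worth checking is that the scalar $\overrightarrow{pq}$ (a vector in $\overrightarrow{\mathbb{A}}$) can be pulled out of the finite sum, which uses nothing more than the distributivity of scalar multiplication and vector addition in $\overrightarrow{\mathbb{A}}$, and that Chasles' relation holds pointwise for each $x \in X$. No hypothesis on $\mu_0$ (e.g., $\mu_0 \neq 0$) is needed, so the identity is universal for all weighted systems and all choices of basepoints $p, q \in \mathbb{A}$.
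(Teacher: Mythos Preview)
Your proof is correct and is precisely the straightforward computation the paper has in mind (the paper simply states ``The proof is straightforward'' without writing it out). One minor wording slip: you refer to $\overrightarrow{pq}$ as ``the scalar $\overrightarrow{pq}$'' before correctly noting it is a vector in $\overrightarrow{\mathbb{A}}$; the argument itself is unaffected.
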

		
		The proof is straightforward.
		
		If $\mu_0 \neq 0$, there is a unique point $G \in \mathbb{A}$ such that $\mu_1(G)=\longvec{O}$, namely
		\[G=q+\frac{1}{\mu_0}\sum_{x \in X}w(x) \longvec{qx},\]
		where $q$ is an arbitrary point of $\mathbb{A}$. We call $G$ the \emph{barycenter} of $(X,w)$ and denote it by $\bary(X,w)$. 
		
		
		
		\begin{prop}  Let $p \in \mathbb{A}, x \in X$. \label{red1stmom}
			\begin{enumerate}
				\item If $\mu_0 \neq 0$ and $G=\text{\emph{bar}}(X,w)$, then $\mu_1(p)=\mu_0\longvec{pG}.$
				\item If $\mu_0=0$ and $w(x) \neq 0$, then $\mu_1(p)=w(x) \longvec{G_x x}$, where $G_x$ is the barycenter of the weighted system $\left(X\setminus\{x\}, w\big|_{X \setminus \{x\}}\right)$.
			\end{enumerate}
		\end{prop}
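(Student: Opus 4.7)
The plan is to prove (1) by directly applying Lemma~\ref{firstmu}, and then to reduce (2) to (1) by splitting off the single point $x$ and viewing the remaining sum as a first moment of a smaller subsystem to which (1) applies.

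For part (1), I will take $q = G$ in Lemma~\ref{firstmu}, which gives $\mu_1(p) - \mu_1(G) = \mu_0 \overrightarrow{pG}$. Since $G = \bary(X,w)$ is defined precisely by $\mu_1(G) = \overrightarrow{O}$, the identity $\mu_1(p) = \mu_0 \overrightarrow{pG}$ follows immediately. This is essentially a one-line argument that repackages the definition of the barycenter through the translation formula of Lemma~\ref{firstmu}.

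For part (2), I first observe that since $\mu_0 = \sum_{y \in X} w(y) = 0$ and $w(x) \neq 0$, the subsystem $(X \setminus \{x\}, w|_{X \setminus \{x\}})$ has total weight $-w(x) \neq 0$, so its barycenter $G_x$ is well defined. I would then isolate the $x$-term in the definition of $\mu_1(p)$:
\[
\mu_1(p) = w(x)\overrightarrow{px} + \sum_{y \in X \setminus \{x\}} w(y)\overrightarrow{py}.
\]
The second summand is the first moment of the reduced weighted system evaluated at $p$, and by part (1) it equals $-w(x)\overrightarrow{pG_x}$. Hence
\[
\mu_1(p) = w(x)\overrightarrow{px} - w(x)\overrightarrow{pG_x} = w(x)\bigl(\overrightarrow{G_x p} + \overrightarrow{px}\bigr),
\]
and a single application of Chasles' relation collapses the parenthesized expression to $\overrightarrow{G_x x}$, giving the claim.

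There is no real obstacle here: the only delicate point is making sure the reduced system in part (2) actually admits a barycenter, which is exactly why the hypothesis $w(x) \neq 0$ is imposed. The remainder is bookkeeping with Chasles' relation and Lemma~\ref{firstmu}. I would state both parts with minimal prose and keep the emphasis on the structural observation that the $\mu_0 = 0$ case can always be handled by reducing to the $\mu_0 \neq 0$ case after peeling off one weighted point.
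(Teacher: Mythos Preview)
Your proof is correct. Part (1) is identical to the paper's argument. For part (2) the paper takes a slightly different path: it first invokes Lemma~\ref{firstmu} to note that $\mu_1$ is constant when $\mu_0=0$, so $\mu_1(p)=\mu_1(x)$; since the $x$-term in $\mu_1(x)$ vanishes, what remains is exactly the first moment of the reduced system evaluated at $x$, and part (1) applied there gives $(-w(x))\overrightarrow{xG_x}=w(x)\overrightarrow{G_xx}$ directly. Your version instead keeps $p$ arbitrary, splits off the $x$-term, applies part (1) to the reduced system at $p$, and then uses Chasles to collapse $\overrightarrow{G_xp}+\overrightarrow{px}$. Both routes hinge on the same reduction to the subsystem $(X\setminus\{x\},w|_{X\setminus\{x\}})$ and part (1); the paper's choice to evaluate at $x$ avoids the final Chasles step at the cost of invoking constancy of $\mu_1$, while yours is self-contained and never uses that constancy. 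The difference is cosmetic.
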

		\begin{proof} 
			Part (1) follows directly from lemma~\ref{firstmu} and the definition of barycenter. For part (2), note that $\mu_1$ is constant according to lemma~\ref{firstmu}, so we have that $\mu_1(p)=\mu_1(x)$. But the total weight of the weighted system $\left(X\setminus\{x\}, w\big|_{X \setminus \{x\}}\right)$ is $-w(x)\neq 0$, and the first moment of $\left(X\setminus\{x\}, w\big|_{X \setminus \{x\}}\right)$ at the point $x$ is $\mu_1(x)$. So, by applying part (1) to the latter system, with $p=x$, we obtain part (2).		
		\end{proof} 
		
		The following are immediate consequences of lemma~\ref{firstmu} and proposition~\ref{red1stmom}. 
		
		\begin{cor}\label{elegant} ${ }$
			\begin{itemize}
				\item[(a)] $\mu_0 \neq 0$ if and only if $\mu_1$ is a bijection.
				\item[(b)] $\mu_0=0$ if and only if $\mu_1$ is constant. 
			\end{itemize}
			In particular, $\mu_1=\longvec{O}$ implies $\mu_0=0$.
		\end{cor}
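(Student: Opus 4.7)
My plan is to read the whole corollary off the identity $\mu_1(p)-\mu_1(q)=\mu_0\overrightarrow{pq}$ of Lemma~\ref{firstmu}, together with the formula $\mu_1(p)=\mu_0\overrightarrow{pG}$ already established in Proposition~\ref{red1stmom}(1). Apart from sign-checking, the only non-vacuous assumption I will need is that the ambient affine space $\mathbb{A}$ is nontrivial in the sense that $\longvec{\mathbb{A}}\neq\{\longvec{O}\}$, which is implicit in the setup.

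For part (b), I would start with the easier direction: if $\mu_0=0$, Lemma~\ref{firstmu} immediately gives $\mu_1(p)=\mu_1(q)$ for all $p,q\in\mathbb{A}$, so $\mu_1$ is constant. Conversely, if $\mu_1$ is constant, then the same lemma forces $\mu_0\overrightarrow{pq}=\longvec{O}$ for every pair $p,q\in\mathbb{A}$; choosing any $p\neq q$ (possible since $\longvec{\mathbb{A}}\neq\{\longvec{O}\}$) the vector $\overrightarrow{pq}$ is nonzero, hence $\mu_0=0$.

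For part (a), the forward direction uses Proposition~\ref{red1stmom}(1): when $\mu_0\neq 0$ the barycenter $G$ exists and $\mu_1(p)=\mu_0\,\overrightarrow{pG}$, which expresses $\mu_1$ as the composition of the bijection $p\mapsto\overrightarrow{pG}$ from $\mathbb{A}$ to $\longvec{\mathbb{A}}$ with scalar multiplication by the nonzero scalar $\mu_0$, and is therefore a bijection. For the converse, note that a bijective map from an affine space with more than one point cannot be constant, so part (b) applied contrapositively yields $\mu_0\neq 0$.

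The final clause ``$\mu_1=\longvec{O}\Rightarrow\mu_0=0$'' is then immediate: the identically zero map is in particular constant, so part (b) applies. I do not expect any real obstacle here; the one thing worth being careful about is not to use Proposition~\ref{red1stmom}(1) circularly in the converse direction of (a), which is why I reduce that converse to part (b) rather than attempting to invert the formula $\mu_1(p)=\mu_0\overrightarrow{pG}$ directly.
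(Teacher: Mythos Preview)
Your proof is correct and follows exactly the route the paper indicates: the paper does not spell out a proof but simply declares the corollary an ``immediate consequence of lemma~\ref{firstmu} and proposition~\ref{red1stmom},'' which are precisely the two ingredients you use. Your explicit remark that the converse directions require $\longvec{\mathbb{A}}\neq\{\longvec{O}\}$ is a nice bit of care that the paper leaves implicit.
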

		
		\begin{cor} \label{barycentriccoords}
			If $\mu_1=\longvec{O}$, then for every $x \in X$ with $w(x)\neq 0$, we have that $x=\bary\left(X\setminus\{x\}, w\big|_{X \setminus \{x\}}\right)$.
		\end{cor}
		
		Notice that, if $X \setminus \{x\}$ in corollary~\ref{barycentriccoords} is a simplex with the same dimension as $X$, then $w\big|_{X \setminus \{x\}}$ defines homogeneous barycentric coordinates for $x$ with respect to $X \setminus \{x\}$. This remark is key to some relevant applications and will be explored in detail in a subsequent work.
		
		
		Regarding the second moment and its relation to the total weight and first moment, we have the following key theorem to which, for briefness, we shall refer by the acronym HLK.
		
		\begin{HLK} Let $\mathbb{A}$ be an Euclidean affine space, $(X,w)$ a weighted system in $\mathbb{A}$, and $\mu_0, \mu_1, \mu_2$ the moments of $(X,w)$. For every $p,q \in \mathbb{A}$, we have that
			\begin{equation} \label{leibniz} 
				\mu_2(p)-\mu_2(q)=\longvec{pq} \cdot (\mu_1(p)+\mu_1(q)).
			\end{equation}
			Therefore  
			\begin{enumerate}
				\item if $\mu_0 \neq 0$ and $G=\text{\emph{bar}}(X,w)$, then $\mu_2(p)-\mu_2(G)=\mu_0\, \longvec{pG}^2$, for all $p \in \mathbb{A}$;
				\item if $\mu_0=0$ and $\mu_1=\longvec{v}$, then $\mu_2(p)-\mu_2(q)=2 \longvec{v}\cdot \longvec{qp},$
				for all $p,q \in \mathbb{A}$; and
				\item for every $q \in \mathbb{A}$, the \emph{Leibniz's identity}
				\[ \mu_0 \mu_2(q)- \mu_1(q)^2=\sum_{\{x,y\} \subset X} w(x)w(y)\longvec{xy}^2
				\]
				holds.
			\end{enumerate}
		\end{HLK}
		\begin{proof}
			In order to obtain equation~\eqref{leibniz}, we just have to multiply both sides of the elementary identity
			\[ \longvec{px}^2= \longvec{qx}^2+\longvec{pq}^2+ 2 \longvec{qx} \cdot \longvec{pq} \]
			by $w(x)$, add for all $x \in X$, and apply lemma~\ref{firstmu}. Picking $q=G$ gives (1), and taking $q \in \mathbb{A}$ arbitrary produces (2). To prove (3), set $p=y \in X$. From equation~\eqref{leibniz}, we can write
			\[ \sum_{y \in X} w(y)\mu_2(y)=\mu_0\mu_2(q)-\mu_1(q)^2+\sum_{y \in X} w(y) \, \longvec{yq} \cdot \mu_1(y) .\]
			By lemma~\ref{firstmu}, we can replace $\mu_1(y)$ by $\mu_0 \longvec{yq} +\mu_1(q)$ in the latter summation. Leibniz's identity follows. 
		\end{proof}
		
		The HLK theorem has a broad range of applications. Below we state two of its corollaries, the second of which is consequential to a significant part of this work, including the formulation of equilibrium problems (section 3) and the study of configurations described in terms of mutual distances (section~\ref{appendix} and appendix A). 
		
		\begin{cor} \label{cute}
			We have that $\mu_1=\longvec{O}$ if and only if $\mu_0=0$ and
			\[\sum_{\{x,y\} \subset X} w(x)w(y)\longvec{xy}^2=0.\]
		\end{cor}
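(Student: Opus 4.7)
The plan is to obtain both implications as essentially direct consequences of Leibniz's identity (part (3) of the HLK theorem) combined with corollary~\ref{elegant}. Almost no new work is needed beyond unpacking these two ingredients correctly.

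For the forward implication, I would assume $\mu_1 = \longvec{O}$. By corollary~\ref{elegant}(b) (or by the final sentence of that corollary), this immediately forces $\mu_0 = 0$. Now I evaluate Leibniz's identity at an arbitrary point $q \in \mathbb{A}$: both $\mu_0$ and $\mu_1(q)$ vanish, so the left-hand side is $0$, and the right-hand side gives the desired vanishing sum.

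For the converse, I would assume $\mu_0 = 0$ and the sum vanishes. By corollary~\ref{elegant}(b), $\mu_1$ is a constant map; call its value $\longvec{v}$. Plugging $\mu_0 = 0$ into Leibniz's identity at any $q$ yields $-\mu_1(q)^2 = 0$, i.e.\ $\longvec{v} \cdot \longvec{v} = 0$. Since the inner product on $\longvec{\mathbb{A}}$ is (assumed) positive definite, this forces $\longvec{v} = \longvec{O}$, hence $\mu_1 = \longvec{O}$.

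There is no real obstacle here; the only subtlety worth flagging explicitly is the step $\longvec{v}^2 = 0 \Rightarrow \longvec{v} = \longvec{O}$, which depends on the positive-definiteness of the inner product and the assumption $K = \reals$ (the Euclidean setting fixed at the beginning of section~\ref{secWS}). A short sentence noting this suffices, after which the proof is complete in a few lines.
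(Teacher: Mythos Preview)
Your proposal is correct and is exactly the argument the paper implicitly intends: the corollary is stated immediately after the HLK theorem without proof precisely because it drops out of Leibniz's identity (part~(3)) together with corollary~\ref{elegant}, just as you outline. Your remark that the converse step $\longvec{v}^2=0 \Rightarrow \longvec{v}=\longvec{O}$ relies on positive definiteness of the inner product is the one point worth making explicit, and you have done so.
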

		
		\sloppy	
		\begin{cor} \label{strongconverse} Suppose $\mu_0=0$. The following assertions are equivalent:  
			\begin{itemize}
				\item[(i)] $\mu_1=\longvec{O}$; 
				\item[(ii)] $\mu_2$ is constant;
				\item[(iii)] the restriction of $\mu_2$ to $X$ is constant.
			\end{itemize}
			
		\end{cor}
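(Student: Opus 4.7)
The plan is to derive all three equivalences from Huygens–Leibniz–König part~(2) together with corollary~\ref{elegant}(b). Since $\mu_0=0$, corollary~\ref{elegant}(b) tells us that $\mu_1$ is a constant function; call its (single) value $\longvec{v}$. Then part~(2) of the HLK theorem gives the clean identity
\[ \mu_2(p)-\mu_2(q) = 2\,\longvec{v}\cdot \longvec{qp} \qquad \text{for all } p,q\in \mathbb{A}, \]
which will be the workhorse throughout.

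The implications $(i)\Rightarrow(ii)$ and $(ii)\Rightarrow(iii)$ are essentially free: if $\longvec{v}=\longvec{O}$, the displayed identity forces $\mu_2$ to be constant on all of $\mathbb{A}$, and the restriction of a constant function is of course constant. The only real content is in $(iii)\Rightarrow(i)$, and this is the step that deserves care.

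For $(iii)\Rightarrow(i)$, I would argue as follows. Choose any $x_0 \in X$. Because $\mu_1$ is constant, we may compute $\longvec{v}$ by evaluating at $x_0$, obtaining
\[ \longvec{v} \;=\; \mu_1(x_0) \;=\; \sum_{x \in X} w(x)\,\longvec{x_0 x}. \]
On the other hand, applying the HLK identity with $p=x$ and $q=x_0$ for each $x\in X$, and using that $\mu_2|_X$ is constant, gives $2\longvec{v}\cdot \longvec{x_0 x} = \mu_2(x)-\mu_2(x_0)=0$, so $\longvec{v}$ is orthogonal to every vector $\longvec{x_0 x}$, $x\in X$. Dotting the displayed expression for $\longvec{v}$ with $\longvec{v}$ itself then yields
\[ \longvec{v}\cdot\longvec{v} \;=\; \sum_{x\in X} w(x)\,\longvec{v}\cdot \longvec{x_0 x} \;=\; 0, \]
and positive-definiteness of the Euclidean inner product forces $\longvec{v}=\longvec{O}$, i.e.\ $\mu_1=\longvec{O}$.

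The main (and essentially only) obstacle is the direction $(iii)\Rightarrow(i)$, where one has to upgrade the orthogonality of $\longvec{v}$ to \emph{all} differences in $X$ into actual vanishing of $\longvec{v}$; the trick is to recognize that $\longvec{v}=\mu_1(x_0)$ itself lies in the span of those very differences, so orthogonality to each of them implies orthogonality to itself. Everything else is a direct reading of HLK\,(2).
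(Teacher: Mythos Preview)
Your proof is correct and follows essentially the same route as the paper's own argument: both use HLK\,(2) to get $(i)\Rightarrow(ii)\Rightarrow(iii)$ for free, and for $(iii)\Rightarrow(i)$ both observe that $\longvec{v}$ is orthogonal to every difference vector $\longvec{x_0x}$ while simultaneously lying in their span (being $\mu_1(x_0)$), whence $\longvec{v}\cdot\longvec{v}=0$. The only cosmetic difference is that you fix a single basepoint $x_0$ and compute $\longvec{v}\cdot\longvec{v}$ explicitly, whereas the paper phrases the same step as ``$\longvec{v}\in\overrightarrow{\mathrm{Aff}(X)}$ and $\longvec{v}\perp\overrightarrow{\mathrm{Aff}(X)}$''.
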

		\begin{proof}
			If $\mu_1=\longvec{O}$, then $\mu_2$ is constant according to part (2) of the HLK theorem. Thus we have (i) $\Rightarrow$ (ii); (ii) $\Rightarrow$ (iii) is immediate. If the restriction of $\mu_2$ to $X$ is constant and $\mu_1=\longvec{v}$ then, from the HLK theorem, part (2),
			\[ 0=\mu_2(y)-\mu_2(x)=2\longvec{v}\cdot \longvec{xy}, \quad \text{for all} \ x,y \in X.\]
			But, for all $y \in X$,
			\[ \longvec{v}=\mu_1(y)=\sum_{x \in X} w(x) \longvec{xy} \in \vecaffX,\]
			where $\vecaffX$ is the subspace of $\longvec{\mathbb{A}}$ spanned by the vectors $\longvec{xy}$ with $x,y \in X$. So we must have that $\longvec{v}=\longvec{O}$, and (iii) $\Rightarrow$ (i).
		\end{proof} 
		
		Corollary~\ref{strongconverse} has the useful implication below. 
		\begin{cor} \label{albchengen}
			The weighted system $(X,w)$ has $\mu_1=\longvec{O}$ if and only if $\mu_0=0$ and 
			\begin{align} \label{sec-mom}
				&\sum_{z \in X} w(z) (\longvec{xz}^2-\longvec{yz}^2)=0 \quad \text{or} \quad  \sum_{z\in X\setminus\{x\} } w(z)(\longvec{xy}^2-\longvec{yz}^2+\longvec{zx}^2)=0,
			\end{align}
			holds for every $x,y \in X$.
		\end{cor}
		\begin{proof} Equations~\eqref{sec-mom} are mere restatements of assertion (iii) in corollary~\ref{strongconverse}. The equations are equivalent since
			\[ \sum_{z \in X} w(z) (\longvec{xz}^2-\longvec{yz}^2)=\sum_{z \in X\setminus\{x\}} w(z) (\longvec{xz}^2-\longvec{yz}^2)+\left(\sum_{z \in X\setminus\{x\}}w(z)\right)\longvec{xy}^2.\]
		\end{proof}
		
		\noindent The relationship between corollaries~\ref{cute} and~\ref{strongconverse} will be clarified in proposition~\ref{moreeqeqs}.

		\section{Equilibrium Equations}  \label{Statics}
		
		Let $X \subset \mathbb{A}$ represent a finite set of interacting particles. Let us assume that the force of interaction between two distinct particles $x,y\in X$ can be expressed as
		\begin{equation} \label{forces}
			\quad	 \longvec{F}_{x,y}=\varphi_{x,y}\longvec{xy}. 
		\end{equation}
		The coefficients $\varphi_{x,y}$ are (rather arbitrary) elements of $K$ and may, by allowing $X$ to vary, be functions of the positions of the points in $X$. 
		
		The \emph{total force} on particle $x$ is $\longvec{F}_x=\sum_{y \neq x}\longvec{F}_{x,y}$. As customary, we define an \emph{equilibrium} of a system of interacting particles as a configuration for which the total force on each particle is zero. For each $x \in X$, let $(X,w_x)$ be the weighted system defined by
		\begin{equation} \label{weights}
			w_x(y)=\varphi_{x,y}, \quad \text{if} \ y \neq x, \quad w_x(x)=-\sum_{y \neq x } \varphi_{x,y}. 
		\end{equation}
		It is clear that each system $(X,w_x)$ has zero total weight and (constant) first moment $\mu_1=\longvec{F}_x$. 
		The next result is a direct consequence of this key observation.
		
		\begin{thm} \label{equilibriathm}
			A configuration $X$ of interacting particles whose forces of interaction are of the form~\eqref{forces} is an equilibrium if and only if the corresponding weighted systems $(X,w_x)$, $x \in X$, defined by equations~\eqref{weights} have zero first moment.
		\end{thm}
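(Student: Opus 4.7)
The plan is to verify the theorem by a direct unpacking of definitions, using corollary~\ref{elegant} as the one non-trivial ingredient. The key observation is that the weight functions $w_x$ are cooked up so that, for each $x \in X$, the system $(X,w_x)$ encodes the total force $\longvec{F}_x$ as its first moment evaluated at a convenient base point.

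First I would check the total weight: by construction,
\[ \sum_{y\in X} w_x(y) = w_x(x) + \sum_{y\neq x}\varphi_{x,y} = -\sum_{y\neq x}\varphi_{x,y}+\sum_{y\neq x}\varphi_{x,y}=0, \]
so $\mu_0=0$ for each of the systems $(X,w_x)$. By corollary~\ref{elegant}(b), this already tells us that the first moment $\mu_1$ of $(X,w_x)$ is a constant function on $\mathbb{A}$, so it suffices to evaluate it at any single point; the natural choice is $p=x$.

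Next I would evaluate $\mu_1$ at $x$. The term $w_x(x)\longvec{xx}=\longvec{O}$ drops out, and the remaining sum yields
\[ \mu_1(x)=\sum_{y\in X} w_x(y)\longvec{xy}=\sum_{y\neq x}\varphi_{x,y}\longvec{xy}=\sum_{y\neq x}\longvec{F}_{x,y}=\longvec{F}_x. \]
Combined with the constancy of $\mu_1$, this gives the pointwise identification $\mu_1\equiv \longvec{F}_x$ for the system $(X,w_x)$.

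Finally, the equivalence follows by quantifying over $x$: the configuration $X$ is an equilibrium precisely when $\longvec{F}_x=\longvec{O}$ for every $x\in X$, which by the previous paragraph is exactly the condition that the first moment of $(X,w_x)$ vanishes for every $x\in X$. There is really no obstacle here — the content of the statement is that the equilibrium condition has been repackaged, one particle at a time, into the vanishing of $\mu_1$, and the only step that requires a prior result is the use of corollary~\ref{elegant}(b) to reduce ``$\mu_1=\longvec{O}$'' to evaluation at a single point.
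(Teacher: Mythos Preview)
Your argument is correct and is precisely the approach taken in the paper: the paragraph preceding the theorem already notes that each $(X,w_x)$ has zero total weight and constant first moment $\mu_1=\longvec{F}_x$, and the theorem is then stated as a direct consequence. You have simply spelled out these two computations and the appeal to corollary~\ref{elegant}(b) explicitly.
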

		Theorem~\ref{equilibriathm} represents a landmark in our work as it bridges the theory of moments from section~\ref{secWS} with applications to equilibrium problems. For instance, corollary~\ref{albchengen}, theorem~\ref{equilibriathm} and proposition~\ref{moreeqeqs} allow us to write (homogeneous) equations for equilibria which depend only on the force coefficients $\varphi_{x,y}$ and squared mutual distances, and have a relatively simple form.
		

		\begin{cor} \label{genalbouychenciner}
			A configuration $X$ is an equilibrium of a system of particles in an Euclidean affine space which interact according to equations~\eqref{forces} if and only if there holds one of the following sets of identities 
		\begin{equation} \label{albouy-chenciner}
			\sum_{z \in X \setminus \{x\}} \varphi_{x,z}(\longvec{xy}^2-\longvec{yz}^2+\longvec{zx}^2)=0,
		\end{equation}
		or \begin{equation} \label{leibniz1}
			\sum_{\{u,z\} \subset X} \varphi_{x,u}\varphi_{y,z} \longvec{uz}^2=0. 
		\end{equation}
		for every pair of points $x,y \in X$.
	\end{cor}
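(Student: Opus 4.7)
The plan is to extract both sets of identities from Theorem~\ref{equilibriathm}, which translates the equilibrium condition into the requirement that $\mu_1(X,w_x)=\longvec{O}$ for every $x\in X$, where $w_x$ is the weight function given by~\eqref{weights}. Since each such $(X,w_x)$ satisfies $\mu_0=0$ by construction, Corollaries~\ref{cute}, \ref{strongconverse} and \ref{albchengen}, together with the bilinear extension of Corollary~\ref{cute} promised by Appendix~B, apply directly.

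For~\eqref{albouy-chenciner}, I would apply Corollary~\ref{albchengen} to $(X,w_x)$. The summation $\sum_{z\in X} w_x(z)(\longvec{xy}^2-\longvec{yz}^2+\longvec{zx}^2)$ includes the index $z=x$, but the corresponding factor $\longvec{xy}^2-\longvec{yx}^2+\longvec{xx}^2$ is identically zero, so the (otherwise awkward) term involving $w_x(x)$ drops out; substituting $w_x(z)=\varphi_{x,z}$ for the remaining $z\neq x$ produces exactly~\eqref{albouy-chenciner}. The same substitution read in reverse, combined with the automatic $\mu_0(X,w_x)=0$, recovers $\mu_1(X,w_x)=\longvec{O}$ from~\eqref{albouy-chenciner} through the same corollary.

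For~\eqref{leibniz1}, the converse direction comes cheaply by specializing $y=x$: that instance is the Leibniz identity for $(X,w_x)$ equated to zero, so Corollary~\ref{cute} immediately yields $\mu_1(X,w_x)=\longvec{O}$. The forward direction when $x\neq y$ is where I expect to lean on Appendix~B: the bilinear analogue of Corollary~\ref{cute} asserts that whenever two weight functions $w,w'$ on $X$ both have $\mu_0=0$ and $\mu_1=\longvec{O}$, the mixed sum $\sum_{\{z,w\}\subset X}\bigl(w(z)w'(w)+w(w)w'(z)\bigr)\longvec{zw}^2$ vanishes. Applying this to the pair $(w_x,w_y)$ and re-expressing each weight via~\eqref{weights} produces~\eqref{leibniz1}.

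The principal obstacle I anticipate is the bilinear identity just invoked. In the absence of Appendix~B, I would obtain it by polarization: for an indeterminate scalar $t$, the weighted system $(X,w_x+t\,w_y)$ also has $\mu_0=0$ and $\mu_1=\longvec{O}$ (both moments being linear in the weight), so Corollary~\ref{cute} forces the quadratic polynomial $\sum_{\{z,w\}\subset X}(w_x(z)+t\,w_y(z))(w_x(w)+t\,w_y(w))\longvec{zw}^2$ to vanish identically in $t$, and the coefficient of $t$ supplies the desired cross-term identity. Everything beyond that is bookkeeping between the diagonal weights $w_x(x)=-\sum_{y\neq x}\varphi_{x,y}$ and the $\varphi$-notation used in the statement, which causes no conceptual difficulty because the offending terms are either killed by symmetry of the factor $\longvec{xy}^2-\longvec{yx}^2+\longvec{xx}^2$ (case~\eqref{albouy-chenciner}) or absorbed into the Leibniz/polarization sum (case~\eqref{leibniz1}).
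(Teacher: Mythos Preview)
Your proposal is correct and follows the paper's route: Theorem~\ref{equilibriathm} combined with Corollary~\ref{albchengen} for~\eqref{albouy-chenciner}, and with the extended Leibniz identity of Appendix~B (forward) plus Corollary~\ref{cute} at $y=x$ (converse) for~\eqref{leibniz1}. The polarization argument you offer as a fallback is a genuinely different and more elementary derivation of the bilinear identity than the one in Appendix~B, which instead passes through the Cayley--Menger matrix and Theorem~\ref{bridge}; the paper's route yields the marginally stronger fact that $W_2^T\mathcal{B}(X)W_1=0$ needs only $W_1\in\ssp_0(X)$ and $\mu_0(W_2)=0$, but for this corollary polarization suffices and avoids that detour.
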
 
	\noindent As a matter of fact, corollary~\ref{cute} ensures that the converse statement in the above corollary is true as long as just the equations~\eqref{leibniz1} with $x=y$ hold. 
	
	Due to their relevance and relation to classical results, we refer to equations~\eqref{albouy-chenciner} and~\eqref{leibniz1} as the \emph{generalized Albouy-Chenciner equations} and \emph{extended Leibniz identities}, respectively. A somewhat curious fact regarding corollary~\ref{genalbouychenciner}  is that the Euclidean structure of the affine space need not coincide with the one used to define the distances that may appear in the arguments of $\varphi_{x,y}$.
	
	The following remarks contain comments on applications.
	
	\begin{rem}
		Expression~\eqref{forces} is commonly adopted to describe internal forces in Classical Mechanics, together with conditions on the $\varphi_{x,y}$ in order to comply with the homogeneity and isotropy of space~\cite{Arnold}. For our purposes, though, it is necessary neither to make these additional assumptions, nor to require the validity of Newton's third law of motion, which corresponds to equalities $\varphi_{x,y}=\varphi_{y,x}$ for all $x,y \in X$. 
	\end{rem}
	
	\begin{rem} \label{scope}
		If a function $f$ depends solely on the mutual distances between particles, then the gradient of $f$ for each individual particle has the form~\eqref{forces}, so theorem~\ref{equilibriathm} and corollary~\ref{genalbouychenciner} can be applied to study the critical points of $f$. In Physics, a force field is called \emph{conservative} when it is the gradient of a (potential) function.  If each $\varphi_{x,y}$ depends only on the distance from $x$ to $y$, that is, if the force fields $F_x:\longvec{F}_{x,y}$, $x,y \in X$ are \emph{central} according to the definition in~\cite{Arnold}, then $F_x$ is shown to be conservative. In addition, if all the forces derive from a single potential function, a simple calculation shows that Newton's third law of motion is valid for the corresponding system. 
	\end{rem}
	
	\begin{rem} \label{inertial}
		External forces acting on the particles (due to a moving reference frame, for instance) may be dealt with by the above method as long as the total force on each particle remains the summation of terms of the form~\eqref{forces}. The example of central configurations of nonzero masses and nonzero total mass illustrates this possibility, see section~\ref{cc}, equations~\eqref{cceqalt} and~\eqref{ccweight} (notice that there, in view of Newton's second law of motion, we found it more convenient to work with accelerations instead of forces). 
		Reference~\cite{Arustamyan} discusses related problems.
	\end{rem}	  
	
	\section{Matrix Formulations} \label{matrix}
	In the next paragraphs we employ basic Linear Algebra to deepen our analysis of the relationship between the moments $\mu_0,$ $\mu_1$ and $\mu_2$. The results of this section yield the equilibrium equations~\eqref{leibniz1} in section~\ref{Statics}, and will be used in section~\ref{appendix} to study the geometry of configurations.
	
	For convenience, from now on we let $X=\{x_1,\ldots,x_n\}$.  
	A weight function $w:X\longrightarrow K$ shall be represented by the \emph{weight vector} $W \in K^n$ whose coordinates are $w(x_j)=w_j$, $j=1,\ldots,n$.

	We select a reference frame and introduce affine coordinates in the affine closure of $X$, $\aff(X)$, in the usual way.  Let $\mathbb{W}(X)=K^n$ denote the space of weight vectors defined on $X$. We call $\ssp(X)$ the \emph{weight space} of $X$. Let us define the $1 \times n$ matrix ${\bm{1}}=\left[1 \ldots 1 \right]$, and the matrices
	\[  \mathcal{X}=\left[x_{ij} \right]_{1\leq j\leq n} \quad \text {and} \quad \mathcal{B}(X)=\left[s_{ij} \right]_{1\leq i,j \leq n},\]
	where $x_{ij}$ is the $i^{\text{th}}$ coordinate of $x_j$ and $s_{ij}=\longvec{x_ix_j}^2$. The matrices ${\bm 1}$, $\mathcal{X}$ and $\mathcal{B}(X)$ establish linear maps on $\mathbb{W}(X)$ corresponding, respectively, to the moments $\mu_0, \mu_1$ and $\mu_2$.
	
	Notice that the condition $\mu_0=0$ corresponds to the weight vector $W$ belonging to the kernel of $\bm{1}$. Denote $\text{Ker}({\bm 1})$ by $\mathbb{D}$. The condition $\mu_1=\longvec{O}$ corresponds to $\bm{1}W=0$ and $\mathcal{X} W=\bm{0}^T$, where $\bm{0}$ is the $1 \times n$ matrix of zeros. So $W$ is a weight vector in the kernel of the \emph{augmented configuration matrix} 
	\[ \mathcal{X}_a=\left[\begin{array}{c}
		\bm{1} \\
		\mathcal{X}
	\end{array}
	\right].
	\]
	Henceforth we write $\text{Ker}(\mathcal{X}_{a})=\ssp_0(X)$. It is clear that $\ssp_0(X) \leq \mathbb{D}$. Now consider $\mathcal{B}(X)$ as a map on $\mathbb{D}$. According to corollary~\ref{strongconverse}, equivalence between sentences (i) and (iii), we have that
	\[ W \in \ssp_0(X) \quad \Longleftrightarrow \quad \exists w_0 \in K \ \ \text{such that} \ \ \mathcal{B}(X)W=w_0 \bm{1}^T.\]
	So  $-w_0 {\bm 1}^T+\mathcal{B}(X)W={\bm 0}^T$, and we may write
	\begin{equation} \label{Cayley-Menger}
		w \in \ssp_0(X) \quad \Longleftrightarrow \quad \mathcal{C}(X) \widetilde{W}={\bm 0},
	\end{equation}
	where $\widetilde{W}=(-w_0,w_1,\ldots,w_n)$ and    
	\[ \mathcal{C}(X)=\left[\begin{array}{cc}
		0&{\bm{1}} \\
		{\bm{1}}^T&\mathcal{B}(X)
	\end{array}
	\right],
	\]
	The determinant of $\mathcal{C}(X)$ is known as the \emph{Cayley-Menger determinant} of $X$~\cite{Berger,Blumenthal}.
	
	\begin{thm} \label{bridge}
		For any configuration $X$, we have that $\emph{Ker}\,\, \mathcal{B}(X) \cap \mathbb{D} \leq \ssp_0(X) \cong \emph{Ker}\, \, \mathcal{C}(X)$, with $\emph{Ker}\,\,\mathcal{B}(X) \cap \mathbb{D} = \ssp_0(X)$ if and only if for every weighted system $(X,w)$ with $\mu_0=0$ and $\mu_1=\longvec{O}$, we also have $\mu_2=0$.
	\end{thm}  
	
	\begin{proof} 
		If $W \in \text{Ker} \,\, \mathcal{B}(x) \cap \mathbb{D}$, then the restriction of $\mu_2$ to $X$ vanishes identically. Thus $\mu_1=\longvec{O}$, by corollary~\ref{strongconverse}, so $\text{Ker} \,\, \mathcal{B}(x) \cap \mathbb{D} \leq \ssp_0(X)$. The isomorphism $\ssp_0(X) \cong \text{Ker}\, \, \mathcal{C}(X)$ is clear from~\eqref{Cayley-Menger}.
	\end{proof}
	\noindent We notice that the aforementioned natural isomorphism $W \mapsto \widetilde{W}$ in theorem~\ref{bridge} seems to have firstly appeared in corollary 3.4 of reference~\cite{Dias}.
	
	A configuration $X$ is called \emph{co-spherical configuration} if there exists a point $p \in \mathbb{A}$ and a scalar $R \in K$ such that $\longvec{px_i}^2=R^2$, for all $i=1,\ldots,n$.
	
	\begin{cor} \label{cospherical}
		If $X$ is a co-spherical configuration, then $\emph{Ker}\,\,\mathcal{B}(X) \cap \mathbb{D}=\ssp_0(X)$. 
	\end{cor}
	\begin{proof}
		Let $p\in \mathbb{A}$, $R \in K$ be as in the definition of co-spherical configuration, and let $W \in \ssp_0(X)$. Since $W \in \mathbb{D}$, we have that
		\[ \mu_2(p)=\sum_{i=1}^n w_i \longvec{px_i}^2=\mu_0 R^2=0,\]
		Thus $\mu_2$ is the null function, and $W \in\text{Ker}\,\, \mathcal{B}(X),$ by theorem~\ref{bridge}.    
	\end{proof}
	
	The matrix $\mathcal{B}(X)$ is known as the \emph{relative configuration matrix}. It is noteworthy that $\mathcal{B}(X)$ plays a key role in the reduction of the $n$-body problem by rotations and translations~\cite{Albouy-Chenciner,Moeckel}, where it is viewed as a symmetric bilinear form on $\mathbb{D}$. We verify the basic properties of this bilinear form below.
	
	\begin{prop} \label{moreeqeqs}
		Let $X$ be a configuration in an affine Euclidean space, and let $\mathcal{B}(X)$ be its relative configuration matrix viewed as a bilinear form on the space $\mathbb{D}$ of weight vectors with $\mu_0=0$.
		\begin{itemize}
			\item[(i)] The quadratic form defined by $\mathcal{B}(X)$ is negative semi-definite.
			\item[(ii)] The kernel of $\mathcal{B}(X)$ is $\ssp_0(X)$. In particular, the restriction of $\mathcal{B}(X)$ to $\ssp_0(X)$ is the null form.
		\end{itemize}
		
	\end{prop}
	\begin{proof} Item (i) is a direct consequence of the Leibniz identity, see the HLK theorem, part (3).
		
		Notice item (ii) follows directly from corollary~\ref{cute}. We provide an alternative proof that helps to clarify the relationship between corollaries~\ref{cute} and~\ref{strongconverse}. Consider the
		basis of $\mathbb{D}$ consisting of the weight vectors $E_{12}=(1,-1,0,\ldots,0), \ldots, E_{1n}=(1,0,\ldots,0,-1)$. By corollary~\ref{strongconverse}, $W \in \ssp_0(X)$ if and only if $\mu_2(x_1)=\mu_2(x_i)$, $i=1,\ldots,n,$ that is, if and only if
		\[ E_{1i}^T \mathcal{B}(X)W=0, \quad \text{for all} \quad i=2,\ldots,n,\]
		which is equivalent to $W^T\mathcal{B}(X)={\bm 0}$ on $\mathbb{D}$. 
	\end{proof}
	
	\begin{rem}
		Early on in the next section we shall define the dimension of a configuration. Proposition~\ref{moreeqeqs}(ii) and lemma~\ref{dimnull} imply that the rank of $\mathcal{B}(X)$ coincides with the dimension of $X$. This fact and proposition~\ref{moreeqeqs}(i) are consistent with $X$ being irreducibly isometrically embedded in $\mathbb{A}$ (see the corollary on page 107 of~\cite{Blumenthal}).
	\end{rem}


	\section{Constraints for Mutual Distances} \label{appendix}
	
	In this section we draw some consequences of theorem~\ref{bridge} for describing configurations in terms of mutual distances. 
	
	As in section~\ref{matrix}, let $X=\{x_1,\ldots,x_n\}$. Denote the dimension of the affine closure of $X$ by $\text{dim}(X)$, or just $d$, and call it the \emph{dimension} of $X$. The difference $c$ between the maximum possible value for $\text{dim}(X)$, which is $n-1$, and $d$, that is $c=(n-1)-d$, is called the \emph{codimension} of $X$, and we write $\text{codim}(X)=c$. 
	

	In the classical paper~\cite{Cayley},  Cayley established determinantal conditions for a configuration to have codimension one, and for  it being co-spherical. Cayley's conditions will be easily deduced using the matrix framework from section~\ref{matrix}. The main result of this section is a theorem which exhibits a complete set of independent constraints on mutual distances to ensure that a configuration has a given dimension. Constraints are also discussed in appendix~\ref{appendix+}, where we show how to obtain simpler, non-determinantal constraints using the HLK theorem.


	\begin{lemma} \label{dimnull}
		$\dim\mathbb{W}_0(X)=\text{\emph{codim}}\, (X)$.
	\end{lemma}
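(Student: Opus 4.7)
The plan is to apply the rank-nullity theorem to the configuration map $\mathcal{X}: \ssp \longrightarrow \reals^{d+1}$. Since $\ssp=\reals^n$, we have
\[ \dim \ssp_0(X) = \dim \ker \mathcal{X} = n - \mathrm{rank}(\mathcal{X}), \]
so the task reduces to showing that $\mathrm{rank}(\mathcal{X}) = d+1$; then $\dim\ssp_0(X) = n-(d+1) = c$.

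The key step is to relate the linear rank of $\mathcal{X}$ to the affine dimension of $X$. I would perform the elementary column operations of subtracting the first column of $\mathcal{X}$ from each of the remaining $n-1$ columns, obtaining the matrix
\[ \mathcal{X}' = \left[\begin{array}{cccc} 1 & 0 & \cdots & 0 \\ x_1 & x_2-x_1 & \cdots & x_n-x_1 \end{array}\right]. \]
Column operations preserve rank, and by expanding along the block structure, $\mathrm{rank}(\mathcal{X}') = 1 + \dim\mathrm{span}\{x_2-x_1, \ldots, x_n-x_1\}$. Now, the affine closure $\mathrm{Aff}(X)$ is by definition the smallest affine subspace of $\reals^d$ containing $X$; choosing $x_1$ as base point, its direction subspace $\overrightarrow{\mathrm{Aff}(X)}$ equals precisely $\mathrm{span}\{x_2-x_1,\ldots,x_n-x_1\}$, of dimension $\dim(X)=d$. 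Therefore $\mathrm{rank}(\mathcal{X}) = 1 + d$, and the rank-nullity formula yields $\dim \ssp_0(X) = n - 1 - d = c$, as desired.

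I do not foresee a serious obstacle here: the result is essentially a restatement of the fact that the affine dimension of $X$ is one less than the linear rank of the homogenized coordinate matrix. The only point requiring a little care is making clear that the kernel of $\mathcal{X}$ is exactly $\ssp_0(X)$, but this follows directly from the definition of $\mathcal{X}$: the first row of $\mathcal{X}W = 0$ encodes $\mu_0 = 0$, while the remaining rows encode, using any origin $O$, the condition $\mu_1(O)=\longvec{O}$, which together with $\mu_0=0$ and lemma~\ref{firstmu} gives $\mu_1 \equiv \longvec{O}$.
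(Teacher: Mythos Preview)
Your proof is correct and follows essentially the same approach as the paper: both subtract the first column of $\mathcal{X}$ from the remaining columns to obtain a matrix whose rank is visibly $1+d$, and then apply rank--nullity. Your write-up is simply more detailed in justifying why the span of the difference vectors has dimension $d$ and why $\ker\mathcal{X}=\ssp_0(X)$.
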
 
	\begin{proof}
		The augmented configuration matrix $\mathcal{X}_a$ has the same rank as the matrix
		\[ \left[\begin{array}{cccc}
			1& 0& \cdots& 0 \\
			x_1& x_2-x_1 &\cdots & x_n-x_1
		\end{array}
		\right] ,
		\]
		that is, $d+1$. The nullity of $\mathcal{X}_a$ is thus $n-(d+1)=(n-1)-d=c$. This is the dimension of $\mathbb{W}_0(X)$.
	\end{proof}
	
	Notice that every configuration of codimension $c=0$ is a simplex, and so is co-spherical. Corollary~\ref{cospherical} implies the following necessary condition for co-sphericity when the codimension is positive.
	
	\begin{cor}
		If $X$ has codimension $c>0$ and is co-spherical, then $\dim \text{\emph Ker}\,\, \mathcal{B}(X)\geq c$ and, in particular, $\det \mathcal{B}(X)=0$.
	\end{cor}
	Cayley deduced the condition $\det \mathcal{B}(X)=0$ in~\cite{Cayley}. He showed that, in the case of four points on a circle, that is, if $n=4$ and $c=1$, this condition implies the classical Ptolemy identity. It turns out that $\det \mathcal{B}(X)=0$ is also a sufficient condition for co-sphericity (cf.~\cite{Berger}, chapter 9).
	
	Next we study the dimension of $X$. Borrowing the terminology from Celestial Mechanics, a configuration of codimension one will be designated as \emph{Dziobek configuration}, or \emph{Dziobek subconfiguration}, when it is a proper subset of a larger configuration. The statements in the following paragraph appear in similar form in reference~\cite{Dias}.
	
	Theorem~\ref{bridge} and lemma~\ref{dimnull} imply that $\mathcal{C}(X)$ has nullity equal to $c$. Thus, for $c=0$, the rank of $\mathcal{C}(X)$ is $n+1$, so $\det \mathcal{C}(X)\neq 0$. This means that there are no constraints on mutual distances, as expected. For configurations of codimension $c>0$, $\mathcal{C}(X)$ has rank $n-c+1$, so all subdeterminants of $\det \mathcal{C}(X)$ of size $(n-c+2)\times(n-c+2)$ must vanish.  In particular, the $(n-c+2)\times(n-c+2)$ principal minors of $\mathcal{C}(X)$, which are precisely the Cayley-Menger determinants of the Dziobek subconfigurations of $X$, are all equal to zero. We shall soon check that not all Cayley-Menger determinants of Dziobek subconfigurations are independent and that, in general, a much smaller number of such determinants suffices to guarantee the correct codimension for $X$.

	We firstly provide a heuristic argument for the number of constraints that must be satisfied by the mutual distances of a configuration of dimension $d$ formed by $n$ points in Euclidean space. Let $E(d)$ be the group of Euclidean (rigid) motions in $\mathbb{E}^d$. The manifold of congruence classes of configurations of $n$ points in $\mathbb{E}^d$ modulo isometries is $\mathbb{E}^{dn}\big /E(d)$, which has dimension $dn-d(d+1)/2$. The natural coordinates to describe the congruence classes are the $n \choose 2$ mutual distances between the points. For configurations of codimension $c=(n-1)-d>0$, the number of mutual distances exceeds the dimension of the manifold of congruence classes by
	\begin{equation} \label{number}
		{n \choose 2} -\left[dn-\frac{d(d+1)}{2}\right]={c+1 \choose 2},
	\end{equation}
	which is the number of constraints that must be imposed on the mutual distances. 
	
	Now we supply a formal statement and proof. 
	
	\sloppy
	\begin{thm} \label{constraints} Let $X$ be a configuration of dimension $d<n-1$. The mutual distances of $X$ are such that
		${c+1 \choose 2}$ functionally independent Cayley-Menger determinants of subconfigurations of $X$ vanish, where $c=\emph{codim}(X)$.
	\end{thm}
	\fussy
	\begin{proof} Inspired by the argument in section 3.5 of~\cite{Hagihara}, we use induction on the codimension of $X$. 
		When $c=1$, there is the unique constraint $\det \mathcal{C}(X)=0$, so the formula is valid in this case. 
		Suppose the formula holds for a configuration $X$ of $n$ points and dimension $d$, where $d=n-(c+1)$. Adding the point $p$ to $X$ without changing the dimension, the new configuration will have codimension $n+1-(d+1)=n-d=c+1$. The additional constraints are the Cayley-Menger determinants of subconfigurations of $X\cup \{p\}$ formed by a selection of (any) $d$ points in $X$, the point $p$, and each of the remaining $n-d$ points of $X$, one at a time. Thus we get a total of
		\[ {c+1 \choose 2}+n-d={c+2 \choose 2}\]
		constraints, as claimed.
	\end{proof} 
	
	It should be noted that, in the proof of theorem~\ref{constraints}, it is possible to choose the subconfiguration of $X\cup \{p\}$ formed by the $d$ points of $X$ and $p$ as $d$-dimensional simplexes, so the additional contraints are expressed as Cayley-Menger determinants of Dziobek subconfigurations of $X\cup \{p\}$.

	\section{Some Applications to central configurations} \label{cc}
	
	We call a \emph{configuration of point masses}, or \emph{bodies}, in $\mathbb{E}^N$, any weighted system $(X,m)$ such that $m$ is a nonvanishing function traditionally referred to as \emph{mass}. As in the previous section, we denote the points of $X$ by $x_i$ and $m_i$ is the value of $m$ at $x_i$, $i=1,\ldots,n$. We may interpret $m$ as the weight vector $(m_1,\ldots,m_n) \in \mathbb{W}(X)=\reals^n$. Assuming the bodies interact under a homogeneous potential, the corresponding acceleration vectors have the form
	\begin{equation*} 
		\longvec{\gamma}_j = \sum_{i\neq j} m_i s_{ij}^{a} \longvec{x_ix_j}, \quad  s_{ij}=\longvec{x_ix_j}^2, \ \ a\in \mathbb{R},
	\end{equation*}
	according to Newton's second law of motion.
	
	Following Albouy~\cite{Albouy}, we say that $(X,m)$ is a \emph{central configuration} if there exist a vector $\longvec{\gamma}_{O}$, a point $x_{O}$ and a real number $\lambda$ such that 
	\begin{equation} \label{cceq}
		\longvec{\gamma}_j-\longvec{\gamma}_O=\lambda \longvec{x_Ox_j}, \quad j=1,\ldots,n.
	\end{equation}
	Albouy has shown that, if $a=0$, then any configuration is central, and if $a \neq 0$ and $d=n-1$, i.e, if $c=0$, then $X$ is a regular simplex for all values of the \emph{total mass} $\mu_0$. In the cases $\mu_0\neq 0$, this assertion follows directly from lemma~\ref{dimnull} and theorem~\ref{cczeroweights} below (as in~\cite{Albouy}, we tacitly assume that $\lambda/\mu_0>0$).
	
	From now on we suppose that $a \neq 0$.
	
	Multiplying both sides of~\eqref{cceq} by $m_j$ and summing over $j$, we obtain an equation relating the constants $\lambda, x_O, \longvec{\gamma}_O$ and the total mass, namely
	\begin{equation} \label{parrel}
		\lambda \mu_1(x_O)=-\mu_0 \longvec{\gamma}_O.
	\end{equation}
	
	
	
	\begin{prop} \label{zeromuzero}
		Let $(X,m)$ be a central configuration with $\mu_0=0$ and $\lambda \neq 0$. Let $X_j= X\setminus\{x_j\}$, $j=1,\ldots,n$. We have that
		
		\begin{itemize}
			\item[(1)] $(X,m)$ has $\mu_1=\longvec{O}$.
			\item[(2)] $x_j=\text{\emph{bar}}\left(X_j, m\big|_{X_j}\right)$, for every $j=1,\ldots,n$.
			\item[(3)] $X$ has codimension $c\geq 1$ and $\mu_2$ is constant. In particular, we have the Leibniz identity $\sum_{i<j} m_i m_j s_{ij}=0$, and the equations
			\[  \sum_{k=1 }^n m_k(s_{ij}-s_{jk}+s_{ik})=0, \quad \text{for every } \ i,j=1,\ldots,n. \]
			If $c=1$ then, for every $j=1,\ldots,n$, $X_j$ is a $d$-dimensional simplex and $x_j$ is not on the boundary of the convex hull of $X_j$.

			
		\end{itemize}
	\end{prop}
	
	\begin{proof} Part (1) follows from equation~\eqref{parrel}, and (2) is a direct consequence of corollary~\ref{barycentriccoords}, because $m$ is nonvanishing. In order to prove (3), note that $\mu_1=\longvec{O}$ provides a nontrivial linear relation for the elements of $X$. Hence $c \geq 1$. Moreover, $\mu_2$ is constant according to corollary~\ref{strongconverse}. The equations follow from corollaries~\ref{cute} and~\ref{albchengen}. The assertion on the case $c=1$ follows directly from (2) and the barycentric interpretation of the (nonzero) coordinates of the mass vector. 
	\end{proof}
	For $n=4$, a full classification of a special family of central configurations with $\mu_0=0$, including examples of central configurations with $\lambda \neq 0$, can be found in~\cite{Celli}.

	Next, let us suppose that that the total mass is nonzero. As proved in~\cite{Albouy}, $(X,m)$ is a central configuration if and only if $\longvec{\gamma}_O$ is the null vector and $x_O=\bary(X,m)$ is the \emph{center of mass} of $(X,m)$. The central configuration equations can thus be put in the form 
	\begin{equation} \label{cceqalt}
		\sum_{i \neq j} m_i S_{ij} \longvec{x_jx_i}=\longvec{O}, \quad j=1, \ldots,n,
	\end{equation}
	where $S_{ij}=s_{ij}^{a}-\frac{\lambda}{\mu_0}$. Equations~\eqref{cceqalt} imply that, for each $j$, a weighted system $(X,C_j)$ with $\mu_1=\longvec{O}$ can be defined, namely
	\begin{equation} \label{ccweight}
		C_j(x_i)=m_i S_{ij}, \quad j \neq i,\ \  C_j(x_j)=-\sum_{i \neq j} m_iS_{ij},\ \ \text{for} \ \ j=1,\ldots,n.
	\end{equation}
	For convenience, we write $C_j(x_j)=m_j S_{jj}$. We have just proved the next theorem.
	
	\begin{thm} \label{cczeroweights}
		A configuration of particles $(X,m)$ with $\mu_0 \neq 0$ is a central configuration if and only if each weighted systems $(X,C_j)$, with $C_j$ defined by equations~\eqref{ccweight}, $j=1,\ldots,n$, have zero first moment.
	\end{thm}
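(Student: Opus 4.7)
The statement is essentially a compilation of observations developed immediately before it, so my plan is to organize those observations into a short, direct verification in both directions, pivoting on the identification of $x_O$ with the center of mass $\bary(X,m)$.

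First I would handle the forward implication. Starting from the defining equations~\eqref{cceq}, I would multiply by $m_j$ and sum over $j$ to obtain the constraint~\eqref{parrel}, namely $\lambda\,\mu_1(x_O) = -\mu_0 \longvec{\gamma}_O$. Since in the definition of a central configuration the reference point $x_O$ and the translational vector $\longvec{\gamma}_O$ can be adjusted together without affecting the physical content (replacing $x_O$ by any other point alters $\longvec{\gamma}_O$ by a multiple of $\lambda$ times the translation vector), I am free to choose $x_O = \bary(X,m)$, which is legitimate because $\mu_0 \neq 0$. With this choice $\mu_1(x_O) = \longvec{O}$, hence $\longvec{\gamma}_O = \longvec{O}$, reproducing the characterization of Albouy cited in the text. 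Then~\eqref{cceq} reduces to $\longvec{\gamma}_j = \lambda \longvec{x_O x_j}$, and substituting $\longvec{\gamma}_j = \sum_{i \neq j} m_i s_{ij}^{a}\longvec{x_j x_i}$ together with $\mu_0\longvec{x_j x_O} = \mu_1(x_j) = \sum_{i\neq j} m_i \longvec{x_j x_i}$ (by proposition~\ref{red1stmom}(1)) produces equations~\eqref{cceqalt} after collecting the $s_{ij}^a$ and $\lambda/\mu_0$ terms into $S_{ij}$.

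Next I would read equations~\eqref{cceqalt} as the statement that the first moment of each $(X,C_j)$ at the point $x_j$ vanishes. By construction~\eqref{ccweight} the total weight $\mu_0(C_j)$ of $(X,C_j)$ is zero, so corollary~\ref{elegant}(b) tells me $\mu_1$ for $(X,C_j)$ is constant; combined with $\mu_1(x_j) = \longvec{O}$ this gives $\mu_1 \equiv \longvec{O}$ for each $j$, which is the desired conclusion.

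For the converse I would simply reverse the chain. If each system $(X,C_j)$ has zero first moment, then in particular $\mu_1(x_j) = \longvec{O}$ for $(X,C_j)$, which is exactly equation~\eqref{cceqalt} with index $j$. Unwinding the definition of $S_{ij}$ and substituting back $\longvec{x_O x_j} = \mu_1(x_j)/\mu_0$ for $x_O = \bary(X,m)$, I recover $\longvec{\gamma}_j = \lambda \longvec{x_O x_j}$ for every $j$, which is~\eqref{cceq} with the specific choice $\longvec{\gamma}_O = \longvec{O}$; hence $(X,m)$ is a central configuration. I do not anticipate a real obstacle here, since all the nontrivial input has already been packaged into~\eqref{parrel} and proposition~\ref{red1stmom}; the only small point of care is to justify the free choice $x_O = \bary(X,m)$, which is where the hypothesis $\mu_0 \neq 0$ is actually used.
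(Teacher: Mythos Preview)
Your proposal is correct and follows essentially the same route as the paper. The paper's ``proof'' is just the discussion preceding the theorem (concluding with ``We have just proved the next theorem''): it cites Albouy for the reduction to $x_O=\bary(X,m)$, $\longvec{\gamma}_O=\longvec{O}$, rewrites~\eqref{cceq} as~\eqref{cceqalt}, and reads~\eqref{cceqalt} as the vanishing of the first moment of $(X,C_j)$ at $x_j$; your write-up expands these same steps, supplying the explicit justification for the freedom in choosing $x_O$ (via the translation $\longvec{\gamma}_O\mapsto\longvec{\gamma}_O+\lambda\longvec{x_Ox_O'}$) and invoking corollary~\ref{elegant}(b) to pass from $\mu_1(x_j)=\longvec{O}$ to $\mu_1\equiv\longvec{O}$.
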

	Therefore we can apply the results of sections~\ref{Statics} and~\ref{appendix} to the study of central configurations with $\mu_0 \neq 0$. In particular, we recover the important result below. 
	
	
	\begin{ACF} \label{AlbCheneqs}
		A configuration of particles $(X,m)$ with $\mu_0 \neq 0$ is central if and only if, for each $j,k=1,\ldots,n$, there holds 
		\[\sum_{ i \neq j } m_i S_{ij}(s_{ij}+s_{jk}-s_{ki})=0.\]
		
	\end{ACF}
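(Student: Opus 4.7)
The plan is to deduce this corollary directly from two results already established: Theorem~\ref{cczeroweights}, which characterizes central configurations with $\mu_0 \neq 0$ as those for which each auxiliary weighted system $(X,C_j)$ defined by~\eqref{ccweight} satisfies $\mu_1 = \longvec{O}$, and Corollary~\ref{albchengen}, which gives the generalized Albouy--Chenciner sum as an equivalent characterization of $\mu_1 = \longvec{O}$ whenever the total weight vanishes.

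First I would verify that each $(X,C_j)$ has $\mu_0 = 0$ by inspection of~\eqref{ccweight}: the value $C_j(x_j) = -\sum_{i\neq j} m_i S_{ij}$ is precisely the negative of the sum of the remaining weights. With this automatic, Corollary~\ref{albchengen} reduces the condition $\mu_1 = \longvec{O}$ for $(X, C_j)$ to the vanishing of
\[
\sum_{z \in X} C_j(z)\bigl(\longvec{xy}^2 - \longvec{yz}^2 + \longvec{zx}^2\bigr), \quad \text{for all } x,y \in X.
\]
It then suffices to restrict to the case $x = x_j, y = x_k, z = x_i$, with $i$ ranging over $X$, obtaining
\[
\sum_{i=1}^n C_j(x_i)(s_{jk} - s_{ki} + s_{ij}) = 0, \quad k = 1,\ldots,n.
\]
Here I'd note that it is the particular ordering $x = x_j$ (the ``special'' point at which the weight $C_j$ takes the balancing value) that matches the grouping $s_{ij} + s_{jk} - s_{ki}$ in the statement; swapping the roles of $x$ and $y$ would produce a different combination, so the choice is deliberate.

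The final step is a clean-up: the $i = j$ summand reads $C_j(x_j)(0 + s_{jk} - s_{jk}) = 0$, so it drops out, and on the remaining terms we may substitute $C_j(x_i) = m_i S_{ij}$ per~\eqref{ccweight}. Ranging over all $j$ via Theorem~\ref{cczeroweights} converts the equivalence into the ``for every $j,k = 1,\ldots,n$'' statement of the corollary. The argument is two lines once the right substitutions are made, so there is no serious obstacle --- the only delicate point is ensuring the indices $x, y, z$ in Corollary~\ref{albchengen} are assigned to $j, k, i$ in exactly the order that produces the sign pattern $s_{ij} + s_{jk} - s_{ki}$.
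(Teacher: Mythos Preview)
Your proof is correct and follows precisely the approach the paper sets up: combine Theorem~\ref{cczeroweights} with Corollary~\ref{albchengen} applied to each $(X, C_j)$. One small remark on the converse direction: your claim that ``it suffices to restrict to $x = x_j$'' is true but deserves a word of justification---with $\mu_0 = 0$, equation~\eqref{albchen} for a given pair $x,y$ reduces to $\mu_2(x)=\mu_2(y)$, so fixing $x=x_j$ and letting $y$ range over $X$ still forces $\mu_2$ constant on $X$, whence $\mu_1=\longvec{O}$ by Corollary~\ref{strongconverse}.
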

	
	Corollary~\ref{genalbouychenciner}, lemma~\ref{dimnull}, theorem~\ref{cczeroweights} and equations~\eqref{ccweight} imply the following corollaries.
	\begin{cor} \label{leibnitzcceqs}
		A configuration of particles $(X,m)$ with $\mu_0 \neq 0$ is a central configuration if and only if
		\begin{equation*} \label{lebnizccs}
			\sum_{i<k} m_im_k S_{ij}S_{kl}s_{ik}=0, \quad j,l=1,\cdots,n.
		\end{equation*}
	\end{cor}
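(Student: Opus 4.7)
The plan is to combine theorem~\ref{cczeroweights}, which identifies central configurations $(X,m)$ with $\mu_0\neq 0$ as precisely those for which every weight vector $C_j$ defined by equations~\eqref{ccweight} lies in $\ssp_0(X)$, with the extended Leibniz identity $W_2^T\mathcal{B}(X)W_1=0$ just established for $W_1\in\ssp_0(X)$ and $W_2\in\mathbb{D}$. Note first that every $C_j$ automatically has $\mu_0=0$, i.e.\ belongs to $\mathbb{D}$, because the convention $C_j(x_j)=m_jS_{jj}$ is chosen precisely to cancel the remaining entries.

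For the forward implication, assume $(X,m)$ is central. Each $C_j$ lies in $\ssp_0(X)$ by theorem~\ref{cczeroweights}, so the extended Leibniz identity applied to the pair $(W_1,W_2)=(C_j,C_l)$ gives
\[ 0 \;=\; C_l^T\mathcal{B}(X)\,C_j \;=\; \sum_{i,k} m_i S_{il}\, s_{ik}\, m_k S_{kj}.\]
Diagonal terms drop because $s_{ii}=0$; grouping each off-diagonal pair $\{i,k\}$ with $i<k$ and using $s_{ik}=s_{ki}$ together with an index relabeling then recasts the identity in the displayed form for every $j,l$.

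For the converse, specialize the hypothesis to $l=j$. The assumed equation becomes
\[ \sum_{i<k} m_i m_k S_{ij} S_{kj}\, s_{ik} \;=\; \sum_{i<k} C_j(x_i)\, C_j(x_k)\, \longvec{x_ix_k}^2 \;=\; 0.\]
Since $\mu_0(C_j)=0$ holds automatically, corollary~\ref{cute} applied to the weighted system $(X,C_j)$ forces $\mu_1=\longvec{O}$, that is, $C_j\in\ssp_0(X)$. Doing this for every $j=1,\ldots,n$ and invoking theorem~\ref{cczeroweights} once more yields that $(X,m)$ is a central configuration.

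The only non-routine point I anticipate is the index bookkeeping in the forward direction: the raw expansion pairs the ordered contributions of $(i,k)$ and $(k,i)$ into a sum symmetric under the swap $(j,l)\leftrightarrow(l,j)$, so some care is required in recasting the identity in the asymmetric-looking form $\sum_{i<k} m_im_kS_{ij}S_{kl}s_{ik}=0$. The converse, in contrast, is immediate once one recognizes the diagonal case $l=j$ as the hypothesis of corollary~\ref{cute} applied to $(X,C_j)$.
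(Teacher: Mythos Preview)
Your argument is correct and matches the paper's: plug the weight vectors $C_j$ of~\eqref{ccweight} into the extended Leibniz identity $W_2^T\mathcal{B}(X)W_1=0$ and invoke theorem~\ref{cczeroweights}; your converse via corollary~\ref{cute} on the diagonal $l=j$ is precisely the observation that $\mathcal{B}(X)\big|_{\mathbb{D}}$ is (negative) semidefinite with zero set $\ssp_0(X)$. On the bookkeeping you flag: no rearrangement will turn $C_l^T\mathcal{B}(X)C_j=0$ into the literally asymmetric $\sum_{i<k}m_im_kS_{ij}S_{kl}s_{ik}=0$ --- one only obtains that this expression is \emph{antisymmetric} in $(j,l)$ --- so the displayed sum in the corollary, like equation~\eqref{leibniz1}, is to be read as the symmetrized (equivalently, full $\sum_{i,k}$) double sum, under which reading your forward direction is already complete.
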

	\begin{proof}
		Apply proposition~\ref{moreeqeqs} to $W_1=C_j$ and $W_2=C_l$. For the converse statement, use corollary~\ref{cute} (as in corollary~\ref{genalbouychenciner}, just the equations with $j=l$ suffice).
	\end{proof}
	
	
	
	
	\begin{cor}  \label{nice1}
		Let $(X,m)$ be a central configuration with $\mu_0 \neq 0$ and codimension $c$. Any $(c+1) \times (c+1)$-minor of the $n \times n$ matrix with entries $S_{ij}$, $i,j=1,\ldots,n,$ is zero.
	\end{cor}
	\noindent The above statement follows lemma~\ref{dimnull} and the fact that $m$ is nonvanishing. In the Dziobek case, $c=1$, we have that $S_{ij}S_{kl}-S_{il}S_{jk}=0,$
	for any four indices. When the indices are distinct, these identities are the well known equations (18) in~\cite{MoeckelI}. 
	



		\appendix
		\numberwithin{thm}{section}
		\numberwithin{equation}{section}
		\section{Direct Constraints for Mutual Distances}
		\label{appendix+}
		
		We provide a method for obtaining constraints for the distances from the points in a fixed configuration to a point in space. This method produces simpler alternatives to the expressions resulting from computing Cayley-Menger determinants. 
		
		\begin{prop} \label{ngons}
			Let $(X,w)$ and $(X',w')$ be weighted systems in an Euclidean affine space whose total weights are nonzero and equal, and whose barycenters coincide. The difference between
			the respective second moments has the same value at all points, namely, the value it has at the (common) barycenter. 
		\end{prop}
		\begin{proof} 
			We just have to apply part (1) of the HLK theorem to $(X,w)$ and $(X',w')$, subtract the corresponding sides of the resulting equations and do a simple manipulation.
		\end{proof}
		\noindent Proposition~\ref{ngons} 
		implies the following curious fact: the difference between the sums of the squared distances from any point to the vertices of two concentric regular $n$-gons of respective squared radii $s$ and $s'$ is always equal to $n(s-s')$, that is, it depends neither on the point nor on the relative position of the $n$-gons! As an illustration, consider a rhombus of squared semidiagonals $s$ and $s'$. The squared distances $s_i$, $i=1,\ldots,4$, in cyclic order, from the vertices of the rhombus to any point in space satisfy the constraint
		\begin{equation} \label{rhombusrel}
			s_1-s_2+s_3-s_4=\pm 2(s-s').
		\end{equation}
		The sign of the expression on the right-hand side of equation~\eqref{rhombusrel} is determined by computing the left-hand side at the centroid of the rhombus. 

		\vspace{.5cm}

		\noindent {\bf Acknowledgements.} I wish to thank Alain Albouy for early discussions on the project that lead to this work,  and to express my gratitude for the comments and suggestions received from the participants of the Mathematical Physics Seminar in Recife, and of events organized by Luis Fernando Mello, Antonio Carlos Fernandes and Lei Zhao. I specially thank Samuel Adrian Antz for sending a list of corrections to a previous version of the manuscript.\\
		
		\noindent {\bf Data availability statement.} The author declares that the data supporting the findings of this study are available within the paper.\\
		
		\noindent {\bf Funding/Competing interests statement.} The author has no relevant financial or non-financial interests to disclose. No support was received from any organization for the submitted work.


	\end{document}